\pgfplotsset{compat=1.14}
\newtheorem{lem}{Lemma}
\newtheorem{cor}{Corollary}
\newtheorem{exmp}{Example}
\theoremstyle{definition}
\newcommand{\CA}[0]{{\mathcal{A}}}
\newcommand{\CB}[0]{{\mathcal{B}}}
\newcommand{\CD}[0]{{\mathcal{D}}}
\newcommand{\CF}[0]{{\mathcal{F}}}
\newcommand{\CH}[0]{{\mathcal{H}}}
\newcommand{\CR}[0]{{\mathcal{R}}}
\newcommand{\CT}[0]{{\mathcal{T}}}
\newcommand{\CV}[0]{{\mathcal{V}}}
\newcommand{\CZ}[0]{{\mathcal{Z}}}
\newcommand{\Bh}[0]{{\mathbf{h}}}
\newcommand{\Bv}[0]{{\mathbf{v}}}
\newcommand{\Bx}[0]{{\mathbf{x}}}
\newcommand{\BA}[0]{{\mathbf{A}}}
\newcommand{\BC}[0]{{\mathbf{C}}}
\newcommand{\BR}[0]{{\mathbf{R}}}
\newcommand{\BV}[0]{{\mathbf{V}}}
\newcommand{\SfX}[0]{{\mathsf{X}}}
\newcommand{\MyRect}[2]{(#2-1+0.1,6-#1+0.1) rectangle (#2-0.1,6-#1+1-0.1)}
\newcommand{\FillBlack}[2]{\filldraw[gray!50]\MyRect{#1}{#2}}
\newcommand{\FillGray}[2]{\filldraw[black!70]\MyRect{#1}{#2}}
\newcommand{\subparagraph}{}
\titlespacing\section{3pt}{6pt plus 4pt minus 2pt}{6pt plus 2pt minus 2pt}
\titlespacing\subsection{3pt}{4pt plus 4pt minus 2pt}{4pt plus 2pt minus 2pt}
\titlespacing\subsubsection{3pt}{3pt plus 4pt minus 2pt}{0pt plus 2pt minus 3pt}
\begin{document}

\title{
A Multi-Antenna Coded Caching Scheme\\with Linear Subpacketization
\thanks{This work was supported by the Academy of Finland under grants no. 319059 (Coded Collaborative Caching for Wireless Energy Efficiency) and 318927 (6Genesis Flagship).}}
\date{August 2019}

\author{
\IEEEauthorblockN{
MohammadJavad Salehi\IEEEauthorrefmark{1},
Antti T\"olli\IEEEauthorrefmark{1},
Seyed Pooya Shariatpanahi\IEEEauthorrefmark{2}
}\\ \vspace{-4mm}
\IEEEauthorblockA{\IEEEauthorrefmark{1}Center for Wireless Communications, University of Oulu, Oulu, Finland.}\\ \vspace{-4mm}
\IEEEauthorblockA{\IEEEauthorrefmark{2}School of Electrical and Computer Engineering, University of Tehran, Tehran, Iran.}\\ \vspace{-4mm}
\IEEEauthorblockA{\{fist\_name.last\_name\}@oulu.fi; p.shariatpanahi@ut.ac.ir}
}

\maketitle

\begin{abstract}
Exponentially growing subpacketization is known to be a major issue for practical implementation of coded~caching, specially in networks with multi-antenna communication setups. We provide a new coded caching scheme for such networks,~which requires linear subpacketization and is applicable to any set of network parameters, as long as the multi-antenna gain $L$~is larger than or equal to the global caching gain $t$.
Our scheme includes carefully designed cache placement and delivery algorithms; which are based on circular shift of two generator arrays in perpendicular directions. It also achieves the maximum possible degrees of freedom of $t+L$, during any transmission interval.
%
\end{abstract}

\begin{IEEEkeywords}
Coded Caching,
Multi-Antenna Communications,
Linear Subpacketization
\end{IEEEkeywords}

\section{Introduction}

Network data traffic has been subject to continuous growth during the past years. The total global IP (Internet Protocol) data volume is estimated to exceed $4.8$ Zettabytes ($10^{21}$ bytes) by 2022, from which 71 percent is expected to pass through wireless networks \cite{cisco2018cisco}. Different applications contribute to the wireless data traffic and each of them requires specific networking Key Performance Indicators (KPIs) such as data rate, delay and reliability. With the introduction of new application types for 5G and beyond (e.g. autonomous vehicles, immersive viewing and massive machine-type communications), extreme advancements for all KPI requirements are expected
\cite{6Genesis2019KeyIntelligence,katz20186genesis}. This has imposed serious challenges in various network layers and solving them is one of the main recent research trends.

Among various networking KPIs, data rate is still of prominent importance. This is mainly due to video applications, as they are expected to account for $82\%$ of the global IP data traffic by 2022 \cite{cisco2018cisco}. 5G networks promote data rates of Gigabits per second; and further increase in the achievable data rate will still be a key driver in future wireless networks \cite{katz20186genesis}.
However, increasing wireless data rate is quite challenging and requires new resources to be used.
5G networks are introducing new frequency bands for cellular communications; from which mm-Wave bands are of much interest as they not only provide larger bandwidth, but also enable cell sizes to be decreased (resulting in better frequency reuse) and larger spatial gain of multi-antenna communications to be achieved \cite{boccardi2014five,osseiran2014scenarios}. 


There is another important resource, recently proposed as a promising enabler of increased data rate for future networks. This idea, originally proposed in \cite{maddah2014fundamental}, is known as Coded Caching and enables a global caching gain, proportional to the total cache size in the network, to be achieved in addition to the local caching gain at each cache location. This results in a new resource, i.e. storage, to become available for data networks; which is specially inspiring as the storage prices are constantly declining \cite{leventhal2008flash}.
Interestingly, coded caching suits well for a majority of video-based applications, for which there is a prime request time (there are time intervals with high request rate) and uneven popularity distribution (a small set of files are requested repeatedly). Also as will be discussed later, it is shown that coded caching gain is additive with multi-antenna gain; making it even more desirable for future networks.

Despite its benefits, coded caching still requires major issues to be solved, before it can be practically implemented. In this paper we target one such issue, known as the large subpacketization requirement. The problem is that the number of smaller parts each file should be split into, known as subpacketization, grows exponentially with respect to the user count $K$; making coded caching implementation infeasible, even for moderate network sizes \cite{lampiris2018adding}. Considering networks with multi-antenna communication setups, we show that linear subpacketization growth is indeed possible, as long as the multi-antenna gain $L$ is larger than or equal to the coded caching gain $t$. Specifically, we introduce a novel coded caching scheme, which requires linear subpacketization with respect to primary network parameters
$K$, $L$, $t$;
to achieve the largest possible degrees of freedom of $t+L$ during any single transmission\footnote{In this paper we assume $t$ does not scale with $K$. If $t$ scales with $K$, the growth in subpacketization will be quadratic.}.
The sole feasibility condition of $L \ge t$ enables the scheme to be applied to a large class of networks; and is in line with the recent trend of using larger antenna arrays.
This is a concrete solution to the subpacketization issue of coded caching schemes, making coded caching one step closer to practical implementation in next-generation networks.

In this paper, we use $[K]$ to denote  $\{1,2,...,K\}$ and $[i:j]$ to represent $\{i,i+1,...,j\}$. Boldface upper- and lower-case letters denote matrices and vectors, respectively. $\BV[i,j]$ refers to the element at the $i$-th row and $j$-th column of matrix $\BV$. Sets are denoted by calligraphic letters. For two sets $\CA$ and $\CB$, $\CA \backslash \CB$ is the set of elements in $\CA$ which are not in $\CB$; and $|\CA|$ represents the number of elements in $\CA$.

\section{System Model}
\label{section:system_model}
We consider a multiple input, single output (MISO) broadcast setup, in which a single server communicates with $K$ users over a shared wireless link with the capacity of $f$ bits per channel use. The server has $L$ transmitting antennas and each user is equipped with a single antenna. Full channel state information (CSI) is available at the server; and it has access to a library of $N \ge K$ files, denoted by $\CF$. Each file $W \in \CF$ has a size of $f$ bits, and each user is equipped with a cache memory of size $Mf$ bits. For simplicity, we use a normalized data unit and drop $f$ in our subsequent notations. 

The system operation consists of two distinct phases, placement and delivery. During the placement phase, which takes place at the low network traffic time, cache memories of the users are filled by data from the files in $\CF$. This in done in accordance with a cache placement algorithm, which operates without any prior knowledge of file request probabilities in the delivery phase. We use $\CZ(k)$ to denote the cache contents of user $k$, after the placement phase is completed.

At the beginning of the delivery phase, each user $k$ reveals its requested file $W(k) \in \CF$. Let us define the demand set as $\CD = \{W(k) \mid k\in [K] \}$. Based on $\CD$ and in accordance with a delivery algorithm, the server builds $S$ transmission vectors $\Bx(1), \Bx(2), ..., \Bx(S)$, each with dimensions $L \times 1$ ($S$ is a design parameter depending on network parameters). Transmission vectors are then transmitted in a TDMA fashion, using the array of $L$ antennas.
After $\Bx(s)$ is transmitted, user $k$ receives
\begin{equation}
\label{eq:reception_model}
    y_k(s) = \Bh_k^T \Bx(s) + w_k (s) \; ,
\end{equation}
where $\Bh_k \in \mathbb{C}^{L}$ denotes the $L \times 1$ channel vector (from~$L$ transmitting antennas); and $w_k(s) \sim \mathbb{C}\mathcal{N}(0,1)$ is the observed noise at user $k$ during transmission interval $s$. Transmission vectors are built such that each user $k$ can decode its requested file $W(k)$, using $\CZ(k)$ (its locally cached data) together with $y_k(1),y_k(2),...,y_k(S)$ (data received from the channel). Let~us denote the set of users targeted by $\Bx(s)$ as $\CT(s)$, for which we have $\CT(s) \subseteq [K]$ and $|\CT(s)| = t+L$. 
We assume zero-forcing beamformers $\Bv_{\CR}$ are used to build $\Bx(s)$, where $\CR \subseteq \CT(s)$ and $|\CR| = t+1$;
and $\Bv_{\CR}$ is built such that $\|\Bv_{\CR}\| = 1$ and
\begin{equation}
\label{eq:zeroforce_definition}
    \begin{aligned}
    \Bh_k^T \Bv_{\CR} &\neq 0 \qquad k \in \CR \; , \\
    \Bh_k^T \Bv_{\CR} &= 0 \qquad \CT(S) \backslash \CR \; .
    \end{aligned}
\end{equation}
We also assume that during downlink training, the server sends orthogonal demodulation pilots precoded by $\Bv_{\CR}$; so that each user $k$ is able to estimate the equivalent channels $\Bh_k^T \Bv_{\CR}$, $\forall \CR$.

Delivery time $T$ is defined as the time required for all users to successfully decode their requested files. Cache placement and delivery algorithms should be designed such that the worst case delivery time (with respect to $\CD$) is minimized. Let us denote the worst case delivery time by $T^*$. Following the common practice in the literature, we assume each user requests a different file, in order to find $T^*$. For simplicity, we also use the notation $A \equiv W(1)$, $B \equiv W(2)$, etc., in the examples provided in this paper. 

As cache placement is done without any knowledge of file request probabilities, an efficient strategy is to store equal-sized data portions of all files in the cache memory of each user. Thereby, every user has $\frac{M}{N}$ of each file in its cache memory, and should receive the rest $(1 - \frac{M}{N})$ of its requested file from the server. This results in a total data size of $K(1-\frac{M}{N})$ to be transmitted over the channel. Let us define the global cache ratio (coded caching gain) $t$ as the total cache size in the network normalized by the number of files, i.e. $t = \frac{KM}{N}$; and assume $t$ is an integer. Then the sum rate of the communication, denoted by $R^*$, is defined as
\begin{equation}
\label{eq:sym_rate_def}
    R^* = \frac{K(1-\frac{t}{K})}{T^*} \; .
\end{equation}
As the channel capacity is one (normalized) data unit per channel use, the symmetric rate also represents how many users benefit from each transmission. So we use the term Degree of Freedom (DoF) equivalent to $R^*$.
The goal is then to design cache placement and delivery algorithms such that DoF is maximized.




\section{State-of-the-Art}

\subsection{Coded Caching}
Coded caching is originally proposed by Maddah-Ali and Niesen in \cite{maddah2014fundamental}, where it is shown that DoF of $t+1$ is achievable with subpacketization $\binom{K}{t}$.
This scheme is later extended~in various directions; e.g. decentralized, hierarchical and multi-server coded caching \cite{maddah2015decentralized,karamchandani2016hierarchical,shariatpanahi2016multi}. Interestingly, in \cite{shariatpanahi2016multi} it is shown that coded caching and multi-server gains are additive; and~so DoF of $t+L$ is achievable with $L$ transmitting servers.~However, the scheme of \cite{shariatpanahi2016multi} requires larger subpacketization of
\begin{equation}
\label{eq:multiserver_subpack}
    \binom{K}{t} \binom{K-t-1}{L-1} \; .
\end{equation}
Following the same concept, multi-antenna coded caching with zero-forcing beamformers is later introduced in \cite{shariatpanahi2017multi,shariatpanahi2018physical}. Optimized beamformers are then used in \cite{tolli2018multicast}, to improve the performance at finite-SNR regime. In \cite{tolli2017multi}, interesting methods based on two design parameters $\alpha,\beta$~are introduced to reduce the optimized beamformer design complexity. However the subpacketization is further increased to
\begin{equation}
    \frac{(\alpha-1)!}{(\delta-1)!(\beta-1)!(t+\beta)!^{\delta-1}} \binom{K}{t} \binom{K-t-1}{L-1} \; ,
\end{equation}
where $\delta = \frac{t+\alpha}{t+\beta}$. In summary, the original scheme of \cite{maddah2014fundamental} and its extensions for multi-antenna setups require exponentially growing subpacketization (with respect to $K$ and for fixed $\frac{M}{N}$), which makes the implementation infeasible even for moderate values of $K$ \cite{lampiris2018adding}. Consequently, reducing subpacketization without decreasing DoF has been studied in the literature, both for single- and multi-antenna coded caching.

\subsection{Subpacketization in Single-Antenna Coded Caching}
Subpacketization is well-studied for single-antenna setups. In \cite{shanmugam2016finite} it is shown that decentralized schemes need exponential subpacketization to achieve any sub-linear rate, for constant $\frac{M}{N}$ as $K \rightarrow \infty$. In \cite{yan2017placement} Placement Delivery~Array (PDA) is presented as a systematic approach to reduce subpacketization in centralized schemes. It is shown that the original scheme of \cite{maddah2014fundamental} is in fact a PDA-driven scheme; and is optimal among a symmetric class of schemes known as $g$-regular PDA.

Following \cite{yan2017placement}, in \cite{yan2017placementb} it is shown that for a constant rate $R^*$, a PDA resulting in linear subpacketization does not exists. In \cite{shangguan2018centralized} a sub-exponential subpacketization scheme for fixed $R^*$ and $\frac{M}{N}$ is proposed. In \cite{shanmugam2017coded} Ruzsa-Szem{\'e}redi graphs are used to design coded caching schemes with linear subpacketization as $K \rightarrow \infty$, but with non-constant $R^*$.


\subsection{Subpacketization in Multi-Antenna Coded Caching}
Subpacketization is less studied for multi-antenna setups. Most notable work on this topic is \cite{lampiris2018adding}, in which it is~shown that if $\frac{K}{L}$ and $\frac{t}{L}$ are both integers, any single-antenna scheme with subpacketization $g(K,t)$ has a multi-antenna counterpart; with subpacketization $g(\frac{K}{L}, \frac{t}{L})$ and without any DoF loss ($g$~is a general function). For example, the scheme of \cite{maddah2014fundamental} can be applied to multi-antenna setups, with subpacketization $\binom{K/L}{t/L}$.
Unfortunately, the scheme of \cite{lampiris2018adding} suffers DoF loss (and also increased subpacketization), if either $\frac{K}{L}$ or $\frac{t}{L}$ is non-integer. Specifically, DoF is reduced
by a multiplicative factor (gap), that is bounded above by 2 when $L > t$, and by $\frac{3}{2}$ when $L < t$.


In \cite{salehi2019subpacketization} it is shown that subpacketization can be traded-off with the performance; and a new approach is introduced for selecting subpacketization in a more flexible manner. The results are however limited to the specific case of $K = t+L$. In \cite{lampiris2019bridging} joint reduction of CSI and subpacketization requirements is considered;
and it shown that subpacketization of $L_c \binom{K_c}{t}$ is achievable, where $L_c = \frac{L+t}{t+1}$ and $K_c = \frac{K}{L_c}$. However, the proposed scheme requires both $L_c$ and $K_c$ to be integers; making it applicable to a very specific set of network parameters. Moreover, it results in a DoF loss by a factor of $(1-\frac{t}{K})$.
\subsection{Our Contribution}
We provide a coded caching scheme with DoF $t+L$ and subpacketization $K \times (t+L)$, for any network with $L \ge t$. The provided scheme requires linear subpacketization with respect to all network parameters $K$, $L$ and $t$, as long as the multi-antenna gain is larger than or equal to the coded caching gain.

\section{Cache Placement}
\label{subsection:placement}
Cache placement is based on placement matrices introduced in \cite{salehi2019subpacketization}, which are also special cases of PDA \cite{yan2017placement}. A placement matrix $\BV$ is a $P \times K$ binary matrix ($P$ can be any integer for which $\frac{Pt}{K}$ is an integer);
with $\sum_p \BV[p,k] = \frac{Pt}{K}, \forall k \in [K]$ and $\sum_k \BV[p,k] = t, \forall p \in [P]$.
Here we use a special placement matrix $\BV$ with $P=K$, in which the first row has $t$ consecutive one elements (other elements are zero); and for the other rows, each row is a circular shift of the previous row by one unit. Given $\BV$, we split each file $W$ into $P=K$ smaller parts $W_p$, and each part $W_p$ into $t+L$ smaller parts $W_p^q$. Then for every $p \in [K], k \in [K]$, if $\BV[p,k] = 1$, $W_p^q$ is stored in the cache memory of user $k$, $\forall W \in \CF, q \in [t+L]$.

\begin{exmp}
\label{exmp:placement_matrix}
Assume $K=6$, $t=2$, $L=3$. $\BV$ is built as
\begin{equation}
\BV = 
    \begin{bmatrix}
    1 & 1 & 0 & 0 & 0 & 0 \\
    0 & 1 & 1 & 0 & 0 & 0 \\
    0 & 0 & 1 & 1 & 0 & 0 \\
    0 & 0 & 0 & 1 & 1 & 0 \\
    0 & 0 & 0 & 0 & 1 & 1 \\
    1 & 0 & 0 & 0 & 0& 1
    \end{bmatrix} \; ,
\end{equation}
and subpacketization is $6 \! \times \! 5 \! = \! 30$.
Cache content of user 1 is
\begin{equation*}
    \begin{aligned}
    \CZ(1) = \{ W_1^1, &W_1^2, W_1^3, W_1^4, W_1^5, \\
    &W_6^1, W_6^2, W_6^3, W_6^4, W_6^5 \; \mid \; W \in \CF\} \; ,
    \end{aligned}
\end{equation*}
and cache content of other users can be written accordingly.
\end{exmp}


\section{Delivery}
\label{section:delivery}
\subsection{Graphical Representation}
\label{section:graphical_rep}
Before formal description of the delivery algorithm, we provide a graphical illustration of its operation for the network of Example \ref{exmp:placement_matrix}.
The delivery algorithm operates in $K=6$ rounds and at each round, $K-t=4$ transmission vectors are built; resulting in $S=24$ total transmission intervals. We also assume the demand set is $\CD = \{A,B,C,D,E,F\}$, and ignore the modulation effect for notation clarity.

Graphical illustration for the first and second rounds are provided in Figures
\ref{fig:graph_exmp_r1c1} and \ref{fig:graph_exmp_r2c2}, respectively. In both figures, each matrix column represents a user and each row stands for a file part index. For example, the first column represents user one, and the first row stands for the first part of all files; i.e. $W_1^q$, $\forall W \in \CF, q \in [t+L]$.
A lightly shaded entry in the matrix means the data part is cached at the respective user.~For example,
$W_1^q,W_6^q$ are stored at user 1, $\forall W \in \CF, q \in [t+L]$. Clearly, the cache placement indicated by Figures \ref{fig:graph_exmp_r1c1} and \ref{fig:graph_exmp_r2c2} is equivalent to the placement matrix $\BV$ provided in Example \ref{exmp:placement_matrix}. 

Consequently, a dark shaded entry indicates which index of the requested file is sent to the respective user, during the given transmission interval. For example, in Figure \ref{fig:sub1} the entries $(3,1), (3,2), (1,3), (1,4), (1,5)$ are dark shaded, which means the first transmission vector at round 1, i.e. $\Bx(1)$, includes $A_3^1,B_3^1, C_1^1, D_1^1, E_1^1$; and $\CT(1)=[1:5]$. With $L=3$ antennas, each part can be nulled out at two users; and we have
\begin{equation}
    \begin{aligned}
        \Bx(1) = &A_3^1 \Bv_{\{1,3,4\}} + B_3^1 \Bv_{\{2,3,4\}} \\
            &+ C_1^1 \Bv_{\{1,2,3\}} + D_1^1 \Bv_{\{1,2,4\}} + E_1^1 \Bv_{\{1,2,5\}} \; . \\
    \end{aligned}
\end{equation}
Note that all superscripts are set to 1, as no data is transmitted prior to $\Bx(1)$. According to \eqref{eq:reception_model} and \eqref{eq:zeroforce_definition}, user 1 receives
\begin{equation}
\label{eq:first_transmission}
    \begin{aligned}
        y_1(1) = &A_3^1 \Bh_1^T \Bv_{\{1,3,4\}} +  \underline{C_1^1 \Bh_1^T \Bv_{\{1,2,3\}}} \\
        &+ \underline{D_1^1 \Bh_1^T \Bv_{\{1,2,4\}}} + \underline{E_1^1 \Bh_1^T\Bv_{\{1,2,5\}}} + w_1(1) \; .
    \end{aligned}
\end{equation}
From Example \ref{exmp:placement_matrix}, we know that $W_1^1 \in \CZ(1)$, $\forall W \in \CF$. Also, based to the system model, user 1 can estimate $\Bh_1^T \Bv_{\CR}$, $\forall \CR$. This means user 1 can reconstruct and remove the underlined terms from its received signal in \eqref{eq:first_transmission}; and finally decode $A_3^1$ interference-free. Similarly, users 2, 3, 4, 5 can decode $B_3^1$, $C_1^1$, $D_1^1$, $E_1^1$ respectively, resulting in DoF of $t+L = 5$ for the first transmission interval.

\begin{figure}[ht]
    \centering
    \begin{subfigure}{.22\textwidth}
        \centering
        \begin{tikzpicture}[scale = 0.5]
            \begin{scope}<+->;
            \draw[step=1cm,very thin,black!60] (0,0) grid (6,6);
            \draw[very thick](0,0)to(0,6);
            \draw[very thick](2,0)to(2,6);
            \draw[very thick](0,5)to(6,5);
            \draw[very thick](0,6)to(6,6);
            \end{scope}
            \begin{scope}
            \FillBlack{1}{1};
            \FillBlack{1}{2};
            \FillBlack{2}{2};
            \FillBlack{2}{3};
            \FillBlack{3}{3};
            \FillBlack{3}{4};
            \FillBlack{4}{4};
            \FillBlack{4}{5};
            \FillBlack{5}{5};
            \FillBlack{5}{6};
            \FillBlack{6}{6};
            \FillBlack{6}{1};
            \FillGray{1}{3};
            \FillGray{1}{4};
            \FillGray{1}{5};
            \FillGray{3}{1};
            \FillGray{3}{2};
            \end{scope}
        \end{tikzpicture}
        \caption{Transmission Interval 1}
        \label{fig:sub1}
    \end{subfigure}
    \begin{subfigure}{.22\textwidth}
        \centering
        \begin{tikzpicture}[scale = 0.5]
            \begin{scope}<+->;
            \draw[step=1cm,very thin,black!60] (0,0) grid (6,6);
            \draw[very thick](0,0)to(0,6);
            \draw[very thick](2,0)to(2,6);
            \draw[very thick](0,5)to(6,5);
            \draw[very thick](0,6)to(6,6);
            \end{scope}
            \begin{scope}
            \FillBlack{1}{1};
            \FillBlack{1}{2};
            \FillBlack{2}{2};
            \FillBlack{2}{3};
            \FillBlack{3}{3};
            \FillBlack{3}{4};
            \FillBlack{4}{4};
            \FillBlack{4}{5};
            \FillBlack{5}{5};
            \FillBlack{5}{6};
            \FillBlack{6}{6};
            \FillBlack{6}{1};
            \FillGray{1}{4};
            \FillGray{1}{5};
            \FillGray{1}{6};
            \FillGray{4}{1};
            \FillGray{4}{2};
            \end{scope}
        \end{tikzpicture}
        \caption{Transmission Interval 2}
        \label{fig:sub2}
    \end{subfigure}
    \begin{subfigure}{.22\textwidth}
        \centering
        \begin{tikzpicture}[scale = 0.5]
            \begin{scope}<+->;
            \draw[step=1cm,very thin,black!60] (0,0) grid (6,6);
            \draw[very thick](0,0)to(0,6);
            \draw[very thick](2,0)to(2,6);
            \draw[very thick](0,5)to(6,5);
            \draw[very thick](0,6)to(6,6);
            \end{scope}
            \begin{scope}
            \FillBlack{1}{1};
            \FillBlack{1}{2};
            \FillBlack{2}{2};
            \FillBlack{2}{3};
            \FillBlack{3}{3};
            \FillBlack{3}{4};
            \FillBlack{4}{4};
            \FillBlack{4}{5};
            \FillBlack{5}{5};
            \FillBlack{5}{6};
            \FillBlack{6}{6};
            \FillBlack{6}{1};
            \FillGray{1}{5};
            \FillGray{1}{6};
            \FillGray{1}{3};
            \FillGray{5}{1};
            \FillGray{5}{2};
            \end{scope}
        \end{tikzpicture}
        \caption{Transmission Interval 3}
        \label{fig:sub3}
    \end{subfigure}
    \begin{subfigure}{.22\textwidth}
        \centering
        \begin{tikzpicture}[scale = 0.5]
            \begin{scope}<+->;
            \draw[step=1cm,very thin,black!60] (0,0) grid (6,6);
            \draw[very thick](0,0)to(0,6);
            \draw[very thick](2,0)to(2,6);
            \draw[very thick](0,5)to(6,5);
            \draw[very thick](0,6)to(6,6);
            \end{scope}
            \begin{scope}
            \FillBlack{1}{1};
            \FillBlack{1}{2};
            \FillBlack{2}{2};
            \FillBlack{2}{3};
            \FillBlack{3}{3};
            \FillBlack{3}{4};
            \FillBlack{4}{4};
            \FillBlack{4}{5};
            \FillBlack{5}{5};
            \FillBlack{5}{6};
            \FillBlack{6}{6};
            \FillBlack{6}{1};
            \FillGray{1}{6};
            \FillGray{1}{3};
            \FillGray{1}{4};
            \FillGray{2}{1};
            \FillGray{6}{2};
            \end{scope}
        \end{tikzpicture}
        \caption{Transmission Interval 4}
        \label{fig:sub4}
    \end{subfigure}
    \caption{Graphical Illustration of the First Round}
    \label{fig:graph_exmp_r1c1}
\end{figure}
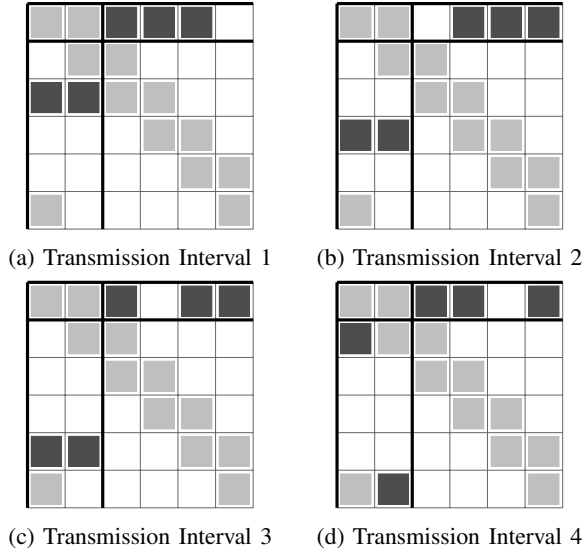

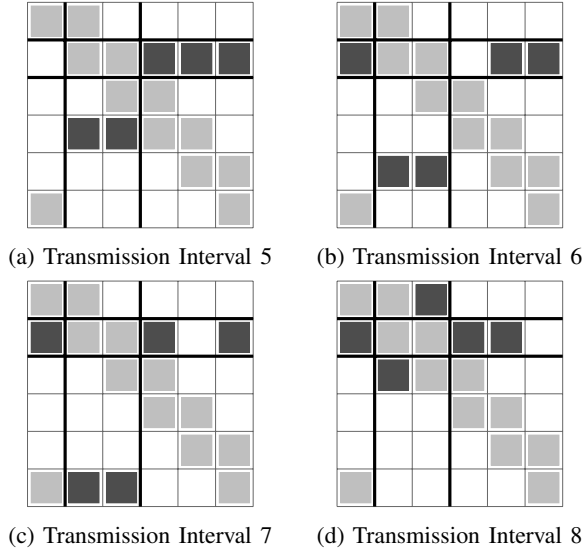
\begin{figure}[ht]
    \centering
    \begin{subfigure}{.22\textwidth}
        \centering
        \begin{tikzpicture}[scale = 0.5]
            \begin{scope}<+->;
            \draw[step=1cm,very thin,black!60] (0,0) grid (6,6);
            \draw[very thick](1,0)to(1,4)to(3,4)to(3,0);
            \draw[very thick](1,6)to(1,5)to(3,5)to(3,6);
            \draw[very thick](0,4)to(1,4)to(1,5)to(0,5);
            \draw[very thick](6,4)to(3,4)to(3,5)to(6,5);
            \end{scope}
            \begin{scope}
            \FillBlack{1}{1};
            \FillBlack{1}{2};
            \FillBlack{2}{2};
            \FillBlack{2}{3};
            \FillBlack{3}{3};
            \FillBlack{3}{4};
            \FillBlack{4}{4};
            \FillBlack{4}{5};
            \FillBlack{5}{5};
            \FillBlack{5}{6};
            \FillBlack{6}{6};
            \FillBlack{6}{1};
            \FillGray{2}{4};
            \FillGray{2}{5};
            \FillGray{2}{6};
            \FillGray{4}{2};
            \FillGray{4}{3};
            \end{scope}
        \end{tikzpicture}
        \caption{Transmission Interval 5}
        \label{fig:sub12}
    \end{subfigure}
    \begin{subfigure}{.22\textwidth}
        \centering
        \begin{tikzpicture}[scale = 0.5]
            \begin{scope}<+->;
            \draw[step=1cm,very thin,black!60] (0,0) grid (6,6);
            \draw[very thick](1,0)to(1,4)to(3,4)to(3,0);
            \draw[very thick](1,6)to(1,5)to(3,5)to(3,6);
            \draw[very thick](0,4)to(1,4)to(1,5)to(0,5);
            \draw[very thick](6,4)to(3,4)to(3,5)to(6,5);
            \end{scope}
            \begin{scope}
            \FillBlack{1}{1};
            \FillBlack{1}{2};
            \FillBlack{2}{2};
            \FillBlack{2}{3};
            \FillBlack{3}{3};
            \FillBlack{3}{4};
            \FillBlack{4}{4};
            \FillBlack{4}{5};
            \FillBlack{5}{5};
            \FillBlack{5}{6};
            \FillBlack{6}{6};
            \FillBlack{6}{1};
            \FillGray{2}{5};
            \FillGray{2}{6};
            \FillGray{2}{1};
            \FillGray{5}{2};
            \FillGray{5}{3};
            \end{scope}
        \end{tikzpicture}
        \caption{Transmission Interval 6}
        \label{fig:sub22}
    \end{subfigure}
    \begin{subfigure}{.22\textwidth}
        \centering
        \begin{tikzpicture}[scale = 0.5]
            \begin{scope}<+->;
            \draw[step=1cm,very thin,black!60] (0,0) grid (6,6);
            \draw[very thick](1,0)to(1,4)to(3,4)to(3,0);
            \draw[very thick](1,6)to(1,5)to(3,5)to(3,6);
            \draw[very thick](0,4)to(1,4)to(1,5)to(0,5);
            \draw[very thick](6,4)to(3,4)to(3,5)to(6,5);
            \end{scope}
            \begin{scope}
            \FillBlack{1}{1};
            \FillBlack{1}{2};
            \FillBlack{2}{2};
            \FillBlack{2}{3};
            \FillBlack{3}{3};
            \FillBlack{3}{4};
            \FillBlack{4}{4};
            \FillBlack{4}{5};
            \FillBlack{5}{5};
            \FillBlack{5}{6};
            \FillBlack{6}{6};
            \FillBlack{6}{1};
            \FillGray{2}{6};
            \FillGray{2}{1};
            \FillGray{2}{4};
            \FillGray{6}{2};
            \FillGray{6}{3};
            \end{scope}
        \end{tikzpicture}
        \caption{Transmission Interval 7}
        \label{fig:sub32}
    \end{subfigure}
    \begin{subfigure}{.22\textwidth}
        \centering
        \begin{tikzpicture}[scale = 0.5]
            \begin{scope}<+->;
            \draw[step=1cm,very thin,black!60] (0,0) grid (6,6);
            \draw[very thick](1,0)to(1,4)to(3,4)to(3,0);
            \draw[very thick](1,6)to(1,5)to(3,5)to(3,6);
            \draw[very thick](0,4)to(1,4)to(1,5)to(0,5);
            \draw[very thick](6,4)to(3,4)to(3,5)to(6,5);
            \end{scope}
            \begin{scope}
            \FillBlack{1}{1};
            \FillBlack{1}{2};
            \FillBlack{2}{2};
            \FillBlack{2}{3};
            \FillBlack{3}{3};
            \FillBlack{3}{4};
            \FillBlack{4}{4};
            \FillBlack{4}{5};
            \FillBlack{5}{5};
            \FillBlack{5}{6};
            \FillBlack{6}{6};
            \FillBlack{6}{1};
            \FillGray{2}{1};
            \FillGray{2}{4};
            \FillGray{2}{5};
            \FillGray{3}{2};
            \FillGray{1}{3};
            \end{scope}
        \end{tikzpicture}
        \caption{Transmission Interval 8}
        \label{fig:sub42}
    \end{subfigure}
    \caption{Graphical Illustration of the Second Round}
    \label{fig:graph_exmp_r2c2}
\end{figure}

The next transmission vectors in round 1, i.e. $\Bx(2)$-$\Bx(4)$, are built by circular shift of $\Bx(1)$ elements over the non-shaded cells of the grid, in two perpendicular directions. Specifically, the first two terms of $\Bx(1)$ are shifted vertically, while the other three terms are shifted horizontally. This procedure is depicted in Figures \ref{fig:sub2}-\ref{fig:sub4}. So $\Bx(2)$ is built as
\begin{equation}
    \begin{aligned}
        \Bx(2) = &A_4^1 \Bv_{\{1,4,5\}} + B_4^1 \Bv_{\{2,4,5\}} \\
            &+ D_1^2 \Bv_{\{1,2,4\}} + E_1^2 \Bv_{\{1,2,5\}} + F_1^1 \Bv_{\{1,2,6\}} \; , \\
    \end{aligned}
\end{equation}
where the superscripts for $D_1$ and $E_1$ are updated to 2, as $D_1^1$ and $E_1^1$ were transmitted by $\Bx(1)$. Similarly, $\Bx(3)$ and $\Bx(4)$ are built as
\begin{equation}
    \begin{aligned}
        \Bx(3) = &A_5^1 \Bv_{\{1,5,6\}} + B_5^1 \Bv_{\{2,5,6\}} \\
            &+ E_1^3 \Bv_{\{1,2,5\}} + F_1^2 \Bv_{\{1,2,6\}} + C_1^2 \Bv_{\{1,2,3\}} \; , \\
        \Bx(4) = &A_2^1 \Bv_{\{1,2,3\}} + B_6^1 \Bv_{\{1,2,6\}} \\
            &+ F_1^3 \Bv_{\{1,2,6\}} + C_1^3 \Bv_{\{1,2,3\}} + D_1^3 \Bv_{\{1,2,4\}} \; .
    \end{aligned}
\end{equation}

The second transmission round includes $\Bx(5)$-$\Bx(8)$, which are built by diagonal shift (simultaneous circular shift of one unit to the right and down)
of $\Bx(1)$-$\Bx(4)$; as shown in Figure~\ref{fig:graph_exmp_r2c2}. Similarly, the third round is built by diagonal shift of $\Bx(5)$-$\Bx(8)$, and this procedure continues until $\Bx(21)$-$\Bx(24)$ are built by diagonal shift of the transmission vectors of the previous round. In total, $6 \times 4 = 24$ transmission intervals are required and each missing part will appear $2+3 = 5$ times; resulting in total subpacketization requirement of $6 \times 5 = 30$.

For a general network setup with parameters $K, t, L$, in a single round we have $K-t$ transmission vectors, and each new round is built by diagonal shift of transmission vectors in the previous round. There exist a total number of $K$ rounds, resulting in $S = K \times (K-t)$ total transmission intervals; and the required subpacketization is $K \times (t+L)$.


\subsection{Delivery Prime Matrices}
\label{section:delivery_prime_matrix}
In order to provide the delivery algorithm, we first introduce and construct Delivery Prime (DP) matrices. Denoted by $\BR_k$ and $\BC_k$, $k \in [K]$, they are a group of matrices with dimensions $(K-t) \times (t+L)$. $\BR_1$ and $\BC_1$ are built using Algorithms \ref{alg:R1_creation} and \ref{alg:C1_creation} (for all algorithms, $K, L, N, t$ are assumed to be global variables known to the procedures);
and for $k > 1$, we use circular increment to build $\BR_k$ and $\BC_k$ from $\BR_{k-1}$ and $\BC_{k-1}$, respectively. For an integer $a$, the intended circular increment operation in domain $K$ is defined as
\begin{equation}
    i_c (a,K) = (a \mod K) + 1 \; ,
\end{equation}
while for a matrix $\BA$ with positive integer elements, $i_c(\BA,K)$ results in a matrix in which each element is the circular increment (in domain $K$) of its respective element in $\BA$. Now for $k \in [2:K]$ we define
\begin{equation}
    \begin{aligned}
    \BR_k = i_c(\BR_{k-1},K) \; ; \;\; \BC_k = i_c(\BC_{k-1},K) \; .
    \end{aligned}
\end{equation}
%
\begin{exmp}
\label{exmp:r1_c1}
For the network of Example \ref{exmp:placement_matrix}, we have
\begin{equation}
\label{eq:r1_exmp_1}
\begin{aligned}
    \BR_1 =
    \begin{bmatrix}
    3 & 3 & 1 & 1 & 1 \\
    4 & 4 & 1 & 1 & 1 \\
    5 & 5 & 1 & 1 & 1 \\
    2 & 6 & 1 & 1 & 1
    \end{bmatrix}, 
    \; \BC_1 =
    \begin{bmatrix}
    1 & 2 & 3 & 4 & 5 \\
    1 & 2 & 4 & 5 & 6 \\
    1 & 2 & 5 & 6 & 3 \\
    1 & 2 & 6 & 3 & 4
    \end{bmatrix}, \\
    \BR_2 =
    \begin{bmatrix}
    4 & 4 & 2 & 2 & 2 \\
    5 & 5 & 2 & 2 & 2 \\
    6 & 6 & 2 & 2 & 2 \\
    3 & 1 & 2 & 2 & 2
    \end{bmatrix},
    \; \BC_2 =
    \begin{bmatrix}
    2 & 3 & 4 & 5 & 6 \\
    2 & 3 & 5 & 6 & 1 \\
    2 & 3 & 6 & 1 & 4 \\
    2 & 3 & 1 & 4 & 5
    \end{bmatrix}.
\end{aligned}
\end{equation}
\end{exmp}

\begin{algorithm}[t]
\caption{$\BR_1$ Generation Procedure}
\label{alg:R1_creation}
\begin{algorithmic}[1]
    \Procedure{Generate $\BR_1$}{}
        \ForAll{$j \in [1:t]$}
            \ForAll{$i \in [1:K-2t+j]$}
                \State $\BR_1 [i,j] \gets t+i$
            \EndFor
            \ForAll{$i \in [K-2t+j+1:K-t]$}
                \State $\BR_1 [i,j] \gets t+i-(K-2t+j-1)-t$
            \EndFor
        \EndFor
        \ForAll{$j \in [t+1:t+L]$, $i \in [1:K-t]$}
            \State $\BR_1 [i,j] \gets 1$
        \EndFor
    \EndProcedure
\end{algorithmic}
\end{algorithm}

\begin{algorithm}[t]
\caption{$\BC_1$ Generation Procedure}
\label{alg:C1_creation}
\begin{algorithmic}[1]
    \Procedure{Generate $\BC_1$}{}
        \ForAll{$j \in [1:t]$, $i \in [1:K-t]$}
            \State $\BC_1 [i,j] \gets j$
        \EndFor
        \ForAll{$j \in [t+1:t+L]$, $i \in [1:K-t]$}
            \If{$j+i-1 \le K$}
                \State $\BC_1 [i,j] \gets j+i-1$
            \Else
                \State $\BC_1 [i,j] \gets j+i-1-(K-t)$
            \EndIf
        \EndFor
    \EndProcedure
\end{algorithmic}
\end{algorithm}

\subsection{The Delivery Algorithm}
Delivery procedure is provided in Algorithm \ref{alg:delivery_main}; with its auxiliary procedures presented in Algorithms \ref{alg:aux_init} and \ref{alg:aux_gen_r}. We have used $\{ \BC_k \} \equiv \{ \BC_1, ..., \BC_K \}$ and $\{ \BR_k \} \equiv \{ \BR_1, ..., \BR_K \}$; and \textsc{Modulate} function returns the modulated version of its input.
The procedure is based on DP matrices. For each $k \in [K]$ we use $\BR_k$ and $\BC_k$ jointly, to create one transmission round. Each round has $K-t$ transmission vectors; resulting in $S = K \times (K-t)$ transmission intervals in total.


As an explanation, for every $k \in [K]$, a transmission~vector $\Bx(s)$ is built for each row in $\BR_k$ and $\BC_k$. More precisely, the entries in each row of $\BC_k$ specify the user indices to be targeted during the transmission interval, i.e. $\CT(s)$; while the entries in the respective row of $\BR_k$ clarify the file parts to be selected for users in $\CT(s)$. Finally, for $n \in [N]$, $k \in [K]$, the superscript $q(n,k)$ is another index which exactly specifies which data portion should be sent during each transmission.


\begin{algorithm}[t]
\caption{Delivery Procedure}
\label{alg:delivery_main}
\begin{algorithmic}[1]
    \Procedure{Delivery}{$\{\BR_k \}, \{\BC_k \}$}
        \State \textsc{Initialize}
        \ForAll{$k \in [K]$}
            \ForAll{$i \in [K-t]$}
                \State $s \gets s+1$
                \State $\Bx(s) \gets 0$
                \ForAll{$j \in [t+L]$}
                    \State $usr \gets \BC_k[i,j]$
                    \State $prt \gets \BR_k[i,j]$
                    \State $ind \gets q \big( W(usr),prt \big)$
                    \State $\SfX \gets $ \textsc{Modulate}$\big( W_{prt}^{ind} (usr) \big)$
                    \State $\CR \gets$ \textsc{Generate $\CR$}($k, i, usr, prt, \{\BC_k \}$)
                    \State $\Bx(s) \gets \Bx(s) + \Bv_{\CR} \SfX$
                    \State $q \big( W(usr),prt \big) \gets q \big( W(usr),prt \big) + 1$
                \EndFor
                \State Transmit $\Bx(s)$
            \EndFor
        \EndFor
    \EndProcedure
\end{algorithmic}
\end{algorithm}

\begin{algorithm}[t]
\caption{Initialization Procedure}
\label{alg:aux_init}
\begin{algorithmic}[1]
    \Procedure{Initialize}{}
        \State $s \gets 0$
        \ForAll{$n \in [N]$, $p \in [K]$}
            \State $q(n,p) \gets 1$
        \EndFor
    \EndProcedure
\end{algorithmic}
\end{algorithm}

\begin{algorithm}[t]
\caption{$\CR$ Generation Function}
\label{alg:aux_gen_r}
\begin{algorithmic}[1]
    \Function{Generate $\CR$}{$k, row, usr, prt, \{\BC_k\}$}
            \State $\CR \gets \{usr \}$
            \ForAll{$r \in [t+L]$}
                \State $node \gets \BC_k[row,r]$
                \If{$\BV[prt,node] = 1$}
                    \State $\CR \gets \CR \cup \{node \}$
                \EndIf
            \EndFor
            \State \Return $\CR$
    \EndFunction
\end{algorithmic}
\end{algorithm}

\section{Validity and Performance}
\label{Section:performance_analysis}
\begin{lem}
\label{lem:dof}
During each transmission interval, exactly $t+L$ users receive part of their requested data, interference free.
\end{lem}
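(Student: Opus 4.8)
The plan is to fix an arbitrary transmission interval, identify its $t+L$ constituent terms, and show that each targeted user recovers exactly one desired file part while every other term it observes is either beamformed to zero or already present in its cache.

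First I would reduce to the first round. Since $\BR_k$ and $\BC_k$ are obtained from $\BR_1,\BC_1$ by repeatedly applying $i_c(\cdot,K)$, passing from round $k$ to round $k{+}1$ adds one (cyclically in $[K]$) to every user index and every part index simultaneously. Because $\BV$ is circulant—user $u$ caches part $p$ iff $(u-p)\bmod K\in\{0,\dots,t-1\}$, so part $p$ is cached by exactly the $t$ users $\{p,\dots,p+t-1\}\pmod K$—the caching relation depends only on $u-p\pmod K$ and is invariant under this joint shift. Hence the cache-membership pattern of every term, and in particular the set $\CR$ returned by Algorithm~\ref{alg:aux_gen_r}, is identical in every round, so it suffices to treat $k=1$.

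Next, for a fixed row $i$ and column $j\in[t+L]$ with target $usr=\BC_1[i,j]$ and part $prt=\BR_1[i,j]$, I would establish the two facts on which everything rests: (A) $usr$ does not cache $prt$, and (B) all $t$ users caching $prt$ lie in the target set $\CT=\{\BC_1[i,r]:r\in[t+L]\}$. Granting (A) and (B), Algorithm~\ref{alg:aux_gen_r} returns $\CR=\{usr\}\cup\{\text{the $t$ cachers of }prt\}$ with $|\CR|=t+1$, so $|\CT\backslash\CR|=L-1$ and the zero-forcing beamformer $\Bv_{\CR}$ of \eqref{eq:zeroforce_definition} nulls this term at exactly those $L-1$ users, which is feasible with $L$ antennas. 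For the block columns $j\in[t+1:t+L]$ this is immediate: Algorithm~\ref{alg:R1_creation} sets $prt=1$, whose cachers $\{1,\dots,t\}$ are precisely the targets of columns $1,\dots,t$ (as $\BC_1[i,r]=r$ there), giving (B), while $usr\in\{t+1,\dots,K\}$ is not a cacher, giving (A). The substance is the case $j\in[1:t]$, where I would feed the explicit entry $\BR_1[i,j]$ into the caching rule and verify that the length-$t$ cyclic cacher block $\{prt,\dots,prt+t-1\}\pmod K$ is absorbed by $\{1,\dots,t\}\cup\CB_i$, where $\CB_i:=\{\BC_1[i,r]:r\in[t+1:t+L]\}$ is, by Algorithm~\ref{alg:C1_creation}, the run of $L$ consecutive users starting at $t+i$ taken cyclically inside $\{t+1,\dots,K\}$.

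Finally, assuming (A) and (B) for all $t+L$ terms, decoding and the count follow cleanly. A user $u\in\CT$ observes the term of column $r$ only if $u\in\CR_r$, i.e. only if $u=\BC_1[i,r]$ or $\BV[\BR_1[i,r],u]=1$; its own term ($r=j$) arrives with nonzero equivalent channel $\Bh_u^T\Bv_{\CR_j}$, and any other observed term has $\BV[\BR_1[i,r],u]=1$, so $u$ caches that part of every file and, knowing $\Bh_u^T\Bv_{\CR_r}$, reconstructs and subtracts it; all remaining terms vanish by \eqref{eq:zeroforce_definition}. Thus each of the $t+L$ targets decodes its intended part interference free, and since the $t+L$ entries in each row of $\BC_1$ are distinct, $|\CT|=t+L$ and no user is counted twice—exactly the claim. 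The main difficulty I anticipate is step (B) for $j\in[1:t]$: the cacher block is cyclic modulo $K$ whereas $\CB_i$ is cyclic modulo $K-t$ within $\{t+1,\dots,K\}$, so the two wrap-arounds must be reconciled by a short case analysis on whether $prt$ and its block straddle the boundary—and it is precisely here that the hypothesis $L\ge t$ is indispensable, guaranteeing the length-$L$ run $\CB_i$ is long enough to contain the portion of the length-$t$ cacher block not already supplied by $\{1,\dots,t\}$.
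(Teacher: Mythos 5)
Your proposal follows essentially the same route as the paper's proof: reduce to the first round by the diagonal-shift invariance of the circulant placement, then show that each of the $t+L$ terms must be nulled at exactly $L-1$ users of the target set, which is feasible with $L$ antennas. Your accounting via $|\CR|=t+1$, hence $|\CT\backslash\CR|=L-1$, is actually slightly cleaner than the paper's split into ``$L-t$ users of $\CH_1^1$ plus $t-1$ users of $\CV_1^1$'' (which is literally accurate only for the first transmission of a round); the one step you defer --- verifying that the $t$ cachers of each part assigned to a user in $\CV_1^j$ all lie inside the target set, which is exactly where $L\ge t$ enters --- is the same step the paper disposes of with an informal appeal to the ``diagonal structure,'' so carrying out that wrap-around case analysis is what would be needed to make either write-up fully rigorous.
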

\begin{proof}
According to graphical and algorithmic representations of Section \ref{section:delivery}, the delivery phase consists of $K$ rounds; and each round includes $K-t$ transmission vectors. However, the transmission vectors at each round are diagonal shifts of the ones at the previous round, and cache placement structure is also immune to the diagonal shift operation. As a result, it is enough to show that all transmission vectors at a single round serve $t+L$ users interference free.

Without loss of generality, let us consider the first round. Using $\CT_r^j$ to denote the set of users targeted at transmission~$j$ of round $r$, we have $\CT_1^1 = [1:t+L]$. We distinguish two disjoint subsets of $\CT_1^1$; as $\CV_1^1 = [1:t]$ and ${\CH_1^1 = [t+1:t+L]}$. For any $v \in \CV_1^1$, we transmit $W_t(v)$ (superscript $q$ is ignored for simplicity), which is available at the cache memory of $t$ users in $\CH_1^1$. So, for interference free data delivery, $W_t(v)$ should be nulled out at the other $L-t$ users of $\CH_1^1$; as well as all $t-1$ users of $\CV_1^1 \backslash \{v\}$. On the other hand, for any $h \in \CH_1^1$, $W_1(h)$ is available at the cache memory of all users in $\CV_1^1$; and should be nulled out at all $L-1$ users of $\CH_1^1 \backslash \{h \}$. So every data part in the transmission vector should be nulled out at $L-1$ users in total, which is indeed possible with $L$ transmitting antennas. This means all users in $\CT_1^1$ can get part of their requested data, interference free.


Subsequent transmission vectors in round one are designed by circular shift of users in $\CV_1^1$ and $\CH_1^1$, in vertical and horizontal directions respectively. However, for any $j \in [K-t]$, the file part intended for each user in $\CH_1^j$ is always available in the cache memory of all users in $\CV_1^j$; and should be nulled out at $L-1$ other users of $\CH_1^j$. Moreover, the diagonal structure of the cache placement causes the file part requested by each user in $\CV_1^j$ to be available in the cache memories of $t$ users in $\CH_1^j$; which means it should be also nulled out at $L-1$ total users. So every transmission vector at round one delivers data to $t+L$ users interference free, and the proof is complete.
\end{proof}

\begin{lem}
For each $W \in \CF$, $W_p$ should be split into $t+L$ smaller parts, for the algorithm to work properly.
\end{lem}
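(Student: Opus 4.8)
The plan is to count, for a fixed requested file $W = W(k^*)$ and a fixed part index $p$ that user $k^*$ does not cache (so $\BV[p,k^*]=0$), the total number of transmission intervals in which $W_p(k^*)$ is sent, and to show this number is always exactly $t+L$. Since the delivery procedure (Algorithm \ref{alg:delivery_main}) assigns a fresh superscript $q(W(k^*),p)$ — incremented by one — to each such transmission, and since by Lemma \ref{lem:dof} every transmitted subpart is decoded interference-free, user $k^*$ recovers one new subpart of $W_p$ per transmission. Hence $W_p$ must be partitioned into precisely as many subparts as there are transmissions carrying it; establishing that this count equals $t+L$ for every missing part proves the claim.

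First I would reduce the entire delivery to its first round. Because $\BR_k = i_c(\BR_{k-1},K)$ and $\BC_k = i_c(\BC_{k-1},K)$, a straightforward induction gives $\BR_k[i,j] \equiv \BR_1[i,j]+(k-1)$ and $\BC_k[i,j] \equiv \BC_1[i,j]+(k-1) \pmod K$. Thus the entrywise difference $\BR_k[i,j]-\BC_k[i,j] \bmod K$ is independent of $k$, while for each fixed cell $(i,j)$ the targeted user $\BC_k[i,j]$ runs over all of $[K]$ exactly once as $k$ ranges over $[K]$. Consequently $W_p(k^*)$ is transmitted, over the whole delivery, exactly once for every cell $(i,j)$ of the first-round arrays with $\BR_1[i,j]-\BC_1[i,j] \equiv p-k^* \pmod K$. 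Writing $d = p-k^* \bmod K$, the task becomes a purely combinatorial count of the cells of $(\BR_1,\BC_1)$ whose difference equals $d$; note that $\BV[p,k^*]=0$ is equivalent to $d \in \{1,\dots,K-t\}$, so only these residues can occur.

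The heart of the argument is to show that each of the $t+L$ columns of $(\BR_1,\BC_1)$ realizes every residue $d \in \{1,\dots,K-t\}$ exactly once, whence the total count $t+L$ is immediate. For a horizontal column $j \in [t+1:t+L]$, Algorithm \ref{alg:C1_creation} makes $\BC_1[\cdot,j]$ range bijectively over $\{t+1,\dots,K\}$ (the non-wrapped branch supplies $j,\dots,K$ and the wraparound branch fills in $t+1,\dots,j-1$), while $\BR_1[\cdot,j]=1$; hence the differences $1-\BC_1[i,j]$ sweep out $\{1,\dots,K-t\}$ once. For a vertical column $j \in [1:t]$, I would verify that Algorithm \ref{alg:R1_creation} makes $\BR_1[\cdot,j]$ enumerate exactly the $K-t$ part indices user $j$ does not cache, i.e. the block $\{j+1,\dots,j+K-t\}$ taken mod $K$, while $\BC_1[\cdot,j]=j$; so again the differences cover $\{1,\dots,K-t\}$ once. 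Summing over the $t$ vertical and $L$ horizontal columns gives exactly $t+L$ cells per residue $d$, so each missing part is carried by $t+L$ transmissions and must be split into $t+L$ subparts.

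I expect the main obstacle to be the bookkeeping in the vertical block: one must check directly from the two index ranges of Algorithm \ref{alg:R1_creation} — the plain branch $\BR_1[i,j]=t+i$ and the wraparound branch — that their union is exactly the missing-part block $\{j+1,\dots,j+K-t\} \bmod K$, with no repetition and never hitting a cached index. This is where the specific offsets and the feasibility condition $L \ge t$ enter, and where an off-by-one in the ranges would destroy the bijection; by contrast, the horizontal block and the cross-round reduction are comparatively routine.
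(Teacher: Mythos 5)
Your counting strategy is sound and reaches the same arithmetic as the paper ($t+L = L\cdot 1 + 1\cdot t$ occurrences per missing part), but it is organized along the transposed axis. The paper fixes a pair $(k,p)$ with $\BV[p,k]=0$ and sums over \emph{rounds}: there is exactly one round in which user $k$ sits in the horizontal group and receives part $p$ (contributing $L$ occurrences within that round), and exactly $t$ rounds in which $k$ sits in the vertical group (each contributing one occurrence). You instead sum over the $t+L$ \emph{columns} of the first-round generator pair $(\BR_1,\BC_1)$, using the observation that $\BR_k-\BC_k \bmod K$ is invariant under the round-to-round circular increment while $\BC_k[i,j]$ sweeps $[K]$ bijectively in $k$; each column then contributes exactly one cell per missing residue $d=p-k^* \bmod K \in [1:K-t]$. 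This buys you something the paper does not attempt: a proof anchored directly in the formal Algorithms~\ref{alg:R1_creation}--\ref{alg:C1_creation} and Algorithm~\ref{alg:delivery_main} rather than in the informal graphical description, and your identification of $\BV[p,k]=0$ with $d\in[1:K-t]$ is correct and makes the "no cached part is ever transmitted" claim explicit.

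The caveat concerns exactly the step you defer as "bookkeeping." If you carry out the verification against the literal pseudocode of Algorithm~\ref{alg:R1_creation}, the vertical-column bijection fails for $t\ge 3$: the wraparound branch yields the values $\{2,\dots,t-j+1\}$, whereas the missing-part block of user $j$ requires $\{j+1,\dots,t\}$; these coincide only for $j=1$ (and the branch is empty for $j=t$), so for $1<j<t$ the column contains the cached index region and omits part of the missing block, which would make the per-residue count non-uniform and break the conclusion. For $t\le 2$ (including the paper's running example) the issue cannot arise, which is presumably why it went unnoticed; it is an off-by-$(j-1)$ in the printed formula rather than a flaw in the intended scheme, whose graphical definition (circular shift of the vertical group over the non-cached cells) does give the correct block. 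The paper's own proof silently sidesteps this by arguing from that graphical description. So your plan is correct for the scheme as intended, but the step you label routine is precisely where you would either have to patch the pseudocode or, like the paper, fall back on the structural description of $\BR_1$.
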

\begin{proof}
Using the same notation as the proof of Lemma \ref{lem:dof}, in round $r$, transmission vectors are built by circular shift of $\CV_r^1$ and $\CH_r^1$, in vertical and horizontal directions respectively. As a consequence, during round $r$:
\begin{itemize}
    \item for each user $k \in \bigcup \CH_r^j$, $W_r(k)$ appears $L$ times in the transmission vectors;
    \item for each user $k \in \CV_r^1$ and $p \in [K]$ such that $\BV [p,k] = 0$, $W_p(k)$ appears once in the transmission vectors.
\end{itemize}
Moreover, for any $k,p \in [K]$ with $\BV [p,k] = 0$, there exists exactly one $r$ value for which $W_p(k)$ appears in $\bigcup \CH_r^j$; but there exist $t$ different $r$ values for which $W_p(k)$ appears in $\CV_r^j$ (for some $j$). So $W_p(k)$ appears $L \times 1 + 1 \times t = L+t$ times during all transmissions; which means each $W_p$ should be split in $t+L$ smaller parts for the algorithm to work properly.
\end{proof}

\begin{cor}
The provided coded caching scheme achieves the maximum possible DoF of $t+L$ during each transmission and requires linear subpacketization of $K \times (t+L)$.
\end{cor}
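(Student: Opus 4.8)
The plan is to combine the two preceding lemmas with the definition of symmetric rate in \eqref{eq:sym_rate_def} and the known converse on the achievable DoF. The corollary splits into two claims — the DoF claim and the subpacketization claim — and each follows almost immediately from one of the lemmas, so most of the work is bookkeeping rather than new argument.

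First I would settle the subpacketization figure, which is the easier half. The placement of Section \ref{subsection:placement} uses $P = K$, so every file $W$ is divided into $K$ parts $W_p$. The second lemma shows that each such part must in turn be split into $t+L$ subparts $W_p^q$, one for each of the $L+t$ distinct occasions on which it is transmitted. Multiplying, the total number of subparts per file is $K \times (t+L)$, which (for fixed $t$) is linear in each of $K$, $L$, $t$; this establishes the subpacketization part.

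Next I would verify that the achieved DoF equals $t+L$. By Lemma \ref{lem:dof}, every transmission interval delivers one fresh subpart, interference free, to each of exactly $t+L$ users. For the demand side, each user caches exactly $t = \frac{Pt}{K}$ of the $K$ parts and therefore must receive the remaining $K-t$ parts, i.e. $(K-t)(t+L)$ subparts, each of normalized size $\frac{1}{K(t+L)}$. Since the delivery uses $S = K(K-t)$ intervals and each interval conveys one subpart's worth of payload per served user over the unit-capacity link, the worst-case delivery time is $T^* = S \cdot \frac{1}{K(t+L)} = \frac{K-t}{t+L}$. Substituting into \eqref{eq:sym_rate_def} yields $R^* = \frac{K(1-\frac{t}{K})}{T^*} = t+L$, so the scheme attains DoF $t+L$. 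For the \emph{maximality} assertion — which does not follow from either lemma — I would invoke the known converse for MISO coded caching: an $L$-antenna server serving single-antenna users has spatial multiplexing gain at most $L$, and the additive global caching gain contributes at most $t$, so no scheme exceeds $t+L$; this is precisely the optimum achieved by \cite{shariatpanahi2016multi}.

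The step I would treat most carefully is the delivery-time accounting, which is the only place the two lemmas must be used \emph{jointly}. One has to confirm that the $S = K(K-t)$ intervals together deliver every needed subpart exactly once — with none repeated and none omitted — so that $T^* = \frac{K-t}{t+L}$ is an exact value and not merely an upper bound. This is a counting check: the total number of subparts delivered is $S(t+L) = K(K-t)(t+L)$, which matches the total demand $K \cdot (K-t)(t+L)$ across all $K$ users exactly. The consistency is what the two lemmas guarantee — the second lemma fixing the $t+L$ count per part, and Lemma \ref{lem:dof} fixing the $t+L$ users served per interval — so once this matching is noted, the corollary follows with no further computation.
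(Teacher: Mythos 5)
Your proposal is correct and follows essentially the same route as the paper, which states the corollary as a direct consequence of the two preceding lemmas: Lemma \ref{lem:dof} gives the $t+L$ users served per interval and the second lemma gives the $t+L$ split of each $W_p$, hence subpacketization $K\times(t+L)$. Your explicit delivery-time accounting ($T^* = \frac{K-t}{t+L}$, yielding $R^* = t+L$ via \eqref{eq:sym_rate_def}) and the converse remark on maximality simply make explicit the bookkeeping the paper leaves implicit.
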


Linear growth in subpacketization enables coded caching to be practically implemented in large networks. For example, if $K=20, L=4, t=2$, the multi-server (MS) scheme of~\cite{shariatpanahi2016multi} requires subpacketization of 129,200; while our algorithm reduces it to 120 (1,000 fold decrease). Keeping $L$ and $t$ fixed and increasing $K$ to 50, the reduction becomes 66,000 fold.

For very small networks however, there exist specific cases where the MS scheme requires smaller subpacketization. For example, for $t=2$, $L=3$, $K \in [5:10]$, subpacketization of both schemes is plotted in Figure \ref{fig:compare_plot}. Clearly, the MS scheme outperforms the new scheme at $K=5$. However, our scheme has smaller subpacketization as~$K$ is increased, and the gap between the two schemes grows exponentially.

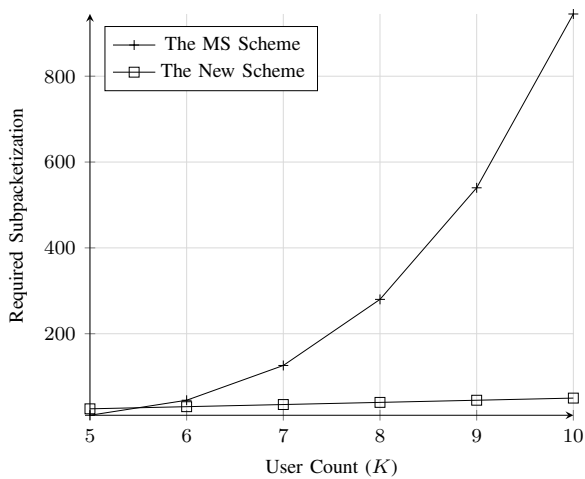
\begin{figure}
    \centering
    \resizebox{0.9\columnwidth}{!}{%

    \begin{tikzpicture}

    \begin{axis}
    [
    axis lines = left,
    xlabel = \smaller {User Count ($K$)},
    ylabel = \smaller {Required Subpacketization},
    ylabel near ticks,
    legend pos = north west,
    ticklabel style={font=\smaller},
    grid=both,
    major grid style={line width=.2pt,draw=gray!30},
    ]
    
    \addplot
    [
    mark = +,
    black
    ]
    table
    [
    y=MultiServer,
    x=K
    ]
    {Figures/Compare_Plot_Data.txt};
    \addlegendentry{\smaller The MS Scheme}
    
    \addplot
    [
    mark = square,
    black
    ]
    table
    [
    y=New,
    x=K
    ]
    {Figures/Compare_Plot_Data.txt};
    \addlegendentry{\smaller The New Scheme}
    
    \end{axis}

    \end{tikzpicture}
    }

    \caption{Subpacketization Comparison - $t=2$, $L=3$}
    \label{fig:compare_plot}
\end{figure}

Compared with the scheme of \cite{lampiris2018adding}, as it has DoF loss if either $\frac{K}{L}$ or $\frac{t}{L}$ is non-integer, 
it is only comparable with our scheme if $t=L$ and at the same time, $\frac{K}{L}$ is an integer. For this special case, it requires subpacketization $\frac{K}{L}$, outperforming our scheme by a factor of $L^3$. However, as mentioned, this is only valid for a very specific class of network parameters and 
otherwise, the two schemes cannot be directly compared.

\section{Conclusion and Future Work}
We proposed a coded caching scheme with linear subpacketization; which is applicable for any set of network parameters as long as the the multi-antenna gain is larger than or equal to the global caching gain. Moreover, it achieves the full additive gain of coded caching and multi-antenna communication, in every transmission interval.

The DoF analysis provided in this paper is applicable only to the high-SNR regime, however. It remains an open problem to analyze the system behavior at finite-SNR, where beamformer design complexity is another issue to be considered alongside the large subpacketization requirement.

Another question is the possibility of constructing the DP matrices (and designing the delivery algorithm respectively), using other values of $P$, i.e. $P \neq K$, for the placement matrix. In fact, increasing $P$ might enable better multicasting opportunity, resulting in increased efficiency index and hence better finite-SNR rate, as outlined in \cite{salehi2019subpacketization}.

Finally, extending the provided scheme to be applicable to a larger selection of network parameters, i.e. the region $t > L$; and improving its subpacketization requirement for $t$ values scaling with $K$, is part of our ongoing research.

\bibliographystyle{IEEEtran}
\bibliography{References}

\begin{thebibliography}{10}
\providecommand{\url}[1]{#1}
\csname url@samestyle\endcsname
\providecommand{\newblock}{\relax}
\providecommand{\bibinfo}[2]{#2}
\providecommand{\BIBentrySTDinterwordspacing}{\spaceskip=0pt\relax}
\providecommand{\BIBentryALTinterwordstretchfactor}{4}
\providecommand{\BIBentryALTinterwordspacing}{\spaceskip=\fontdimen2\font plus
\BIBentryALTinterwordstretchfactor\fontdimen3\font minus
  \fontdimen4\font\relax}
\providecommand{\BIBforeignlanguage}[2]{{%
\expandafter\ifx\csname l@#1\endcsname\relax
\typeout{** WARNING: IEEEtran.bst: No hyphenation pattern has been}%
\typeout{** loaded for the language `#1'. Using the pattern for}%
\typeout{** the default language instead.}%
\else
\language=\csname l@#1\endcsname
\fi
#2}}
\providecommand{\BIBdecl}{\relax}
\BIBdecl

\bibitem{cisco2018cisco}
V.~N.~I. Cisco, ``{Cisco visual networking index: Forecast and trends,
  2017--2022},'' \emph{White Paper}, vol.~1, 2018.

\bibitem{6Genesis2019KeyIntelligence}
{6Genesis}, ``{Key Drivers and Research Challenges for 6G Ubiquitous Wireless
  Intelligence},'' \emph{White Paper}, vol.~1, 2019.

\bibitem{katz20186genesis}
M.~Katz, M.~Matinmikko-Blue, and M.~Latva-Aho, ``{6Genesis Flagship Program:
  Building the Bridges Towards 6G-Enabled Wireless Smart Society and
  Ecosystem},'' in \emph{2018 IEEE 10th Latin-American Conference on
  Communications (LATINCOM)}.\hskip 1em plus 0.5em minus 0.4em\relax IEEE,
  2018, pp. 1--9.

\bibitem{boccardi2014five}
F.~Boccardi, R.~W. Heath, A.~Lozano, T.~L. Marzetta, and P.~Popovski, ``{Five
  disruptive technology directions for 5G},'' \emph{IEEE communications
  magazine}, vol.~52, no.~2, pp. 74--80, 2014.

\bibitem{osseiran2014scenarios}
A.~Osseiran, F.~Boccardi, V.~Braun, K.~Kusume, P.~Marsch, M.~Maternia,
  O.~Queseth, M.~Schellmann, H.~Schotten, H.~Taoka, and {others}, ``{Scenarios
  for 5G mobile and wireless communications: the vision of the METIS
  project},'' \emph{IEEE communications magazine}, vol.~52, no.~5, pp. 26--35,
  2014.

\bibitem{maddah2014fundamental}
M.~A. Maddah-Ali and U.~Niesen, ``{Fundamental limits of caching},'' \emph{IEEE
  Transactions on Information Theory}, vol.~60, no.~5, pp. 2856--2867, 2014.

\bibitem{leventhal2008flash}
A.~Leventhal, ``{Flash storage memory},'' \emph{Communications of the ACM},
  vol.~51, no.~7, pp. 47--51, 2008.

\bibitem{lampiris2018adding}
E.~Lampiris and P.~Elia, ``{Adding transmitters dramatically boosts
  coded-caching gains for finite file sizes},'' \emph{IEEE Journal on Selected
  Areas in Communications}, vol.~36, no.~6, pp. 1176--1188, 2018.

\bibitem{maddah2015decentralized}
M.~A. Maddah-Ali and U.~Niesen, ``{Decentralized coded caching attains
  order-optimal memory-rate tradeoff},'' \emph{IEEE/ACM Transactions on
  Networking (TON)}, vol.~23, no.~4, pp. 1029--1040, 2015.

\bibitem{karamchandani2016hierarchical}
N.~Karamchandani, U.~Niesen, M.~A. Maddah-Ali, and S.~N. Diggavi,
  ``{Hierarchical coded caching},'' \emph{IEEE Transactions on Information
  Theory}, vol.~62, no.~6, pp. 3212--3229, 2016.

\bibitem{shariatpanahi2016multi}
S.~P. Shariatpanahi, S.~A. Motahari, and B.~H. Khalaj, ``{Multi-server coded
  caching},'' \emph{IEEE Transactions on Information Theory}, vol.~62, no.~12,
  pp. 7253--7271, 2016.

\bibitem{shariatpanahi2017multi}
S.~P. Shariatpanahi, G.~Caire, and B.~H. Khalaj, ``{Multi-antenna coded
  caching},'' in \emph{2017 IEEE International Symposium on Information Theory
  (ISIT)}.\hskip 1em plus 0.5em minus 0.4em\relax IEEE, 2017, pp. 2113--2117.

\bibitem{shariatpanahi2018physical}
------, ``{Physical-layer schemes for wireless coded caching},'' \emph{IEEE
  Transactions on Information Theory}, vol.~65, no.~5, pp. 2792--2807, 2018.

\bibitem{tolli2018multicast}
A.~T{\"{o}}lli, S.~P. Shariatpanahi, J.~Kaleva, and B.~Khalaj, ``{Multicast
  beamformer design for coded caching},'' in \emph{2018 IEEE International
  Symposium on Information Theory (ISIT)}.\hskip 1em plus 0.5em minus
  0.4em\relax IEEE, 2018, pp. 1914--1918.

\bibitem{tolli2017multi}
------, ``{Multi-antenna interference management for coded caching},''
  \emph{arXiv preprint arXiv:1711.03364}, 2017.

\bibitem{shanmugam2016finite}
K.~Shanmugam, M.~Ji, A.~M. Tulino, J.~Llorca, and A.~G. Dimakis,
  ``{Finite-length analysis of caching-aided coded multicasting},'' \emph{IEEE
  Transactions on Information Theory}, vol.~62, no.~10, pp. 5524--5537, 2016.

\bibitem{yan2017placement}
Q.~Yan, M.~Cheng, X.~Tang, and Q.~Chen, ``{On the placement delivery array
  design for centralized coded caching scheme},'' \emph{IEEE Transactions on
  Information Theory}, vol.~63, no.~9, pp. 5821--5833, 2017.

\bibitem{yan2017placementb}
Q.~Yan, X.~Tang, Q.~Chen, and M.~Cheng, ``{Placement delivery array design
  through strong edge coloring of bipartite graphs},'' \emph{IEEE
  Communications Letters}, vol.~22, no.~2, pp. 236--239, 2017.

\bibitem{shangguan2018centralized}
C.~Shangguan, Y.~Zhang, and G.~Ge, ``{Centralized coded caching schemes: A
  hypergraph theoretical approach},'' \emph{IEEE Transactions on Information
  Theory}, vol.~64, no.~8, pp. 5755--5766, 2018.

\bibitem{shanmugam2017coded}
K.~Shanmugam, A.~M. Tulino, and A.~G. Dimakis, ``{Coded caching with linear
  subpacketization is possible using Ruzsa-Szem{\'{e}}redi graphs},'' in
  \emph{2017 IEEE International Symposium on Information Theory (ISIT)}.\hskip
  1em plus 0.5em minus 0.4em\relax IEEE, 2017, pp. 1237--1241.

\bibitem{salehi2019subpacketization}
M.~Salehi, A.~T{\"{o}}lli, S.~P. Shariatpanahi, and J.~Kaleva,
  ``{Subpacketization-Rate Trade-off in Multi-Antenna Coded Caching},'' in
  \emph{2019 IEEE Global Communications Conference (GLOBECOM)}.\hskip 1em plus
  0.5em minus 0.4em\relax IEEE, 2019, pp. 1--6.

\bibitem{lampiris2019bridging}
E.~Lampiris and P.~Elia, ``{Bridging two extremes: Multi-antenna Coded Caching
  with Reduced Subpacketization and CSIT},'' in \emph{2019 IEEE 20th
  International Workshop on Signal Processing Advances in Wireless
  Communications (SPAWC)}.\hskip 1em plus 0.5em minus 0.4em\relax IEEE, 2019,
  pp. 1--5.

\end{thebibliography}

\end{document}